\DeclareMathAlphabet{\mathbbm}{U}{bbm}{m}{n}
\newtheorem{theorem}{Theorem}[section]
\newtheorem{lemma}[theorem]{Lemma}
\newtheorem{proposition}[theorem]{Proposition}
\newtheorem{corollary}[theorem]{Corollary}
\numberwithin{equation}{section}
\mathchardef\mhyphen="2D
\title{Phaseless Super-resolution in the Continuous Domain}
\name{{Myung~Cho$^{1}$, Christos~Thrampoulidis$^{2}$, Weiyu~Xu$^{1}$, and Babak~Hassibi$^{2}$}
}
\address{
\small
$^{1}$ Dept. of ECE, University of Iowa, Iowa City, IA, 52242\\
\small
$^{2}$ Dept. of EE, California Institute of Technology, Pasadena, CA, 91125
\normalsize
}
\begin{document}
%
\maketitle
\begin{abstract}
Phaseless super-resolution refers to the problem of super-resolving a signal from only its low-frequency Fourier magnitude measurements. In this paper, we consider the phaseless super-resolution problem of recovering a sum of sparse Dirac delta functions which can be located anywhere in the continuous time-domain. For such signals in the continuous domain, we propose a novel Semidefinite Programming (SDP) based signal recovery method to achieve the phaseless super-resolution. This work extends the recent work of Jaganathan \textit{et al.} \cite{jaganathan2016phaseless}, which considered phaseless super-resolution for discrete signals on the grid.
\end{abstract}
\begin{keywords}
super-resolution, microscopy, phaseless, continuous domain, atomic norm
\end{keywords}
\section{Introduction}
\label{sec:intro}
In engineering and science, improving the accuracy and precision of measurement tools, such as microscopy, X-ray crystallography and MRI, is of great interest. However, due to the physical limitations in measurement tools, sometimes we can only indirectly or partially observe a signal of interest, e.g., obtaining only low-frequency information, only low-resolution image, or only the magnitude of a signal. The microscope is a good example of a measurement tool having such physical limitations ranging from low-frequency measurements to phaseless measurements \cite{brenner1959negative,zhang2006functional,benveniste2002mr,stolyarova2007high,huang2009super}. 

To overcome the limitation of low-frequency measurements, researchers have investigated recovering a signal from only its low-frequency Fourier measurements, and referred to it as \textit{super-resolution}. The authors in \cite{candes2014towards} and \cite{tang2012csotg} proposed SDP based methods for the recovery of signals in the continuous domain under certain separation conditions, by employing \textit{Total Variation Norm Minimization (TVNM)} and \textit{Atomic Norm Minimization (ANM)} respectively.
Besides, to address the issue of phaseless measurements, people have studied the \textit{phase retrieval} to obtain phase information from the magnitude measurements of a signal \cite{fienup1982phase,gerchberg1972practical}. Recently, in \cite{candes2015phase}, the authors proposed a trace-norm minimization to solve the phase retrieval problem with the use of masks.

Super-resolving a signal from only magnitudes of low-frequency Fourier measurements is often ill-posed due to lack of both phase information and high-frequency information; and hence it is a challenging problem. The authors in \cite{chen2014algorithm,jaganathan2016phaseless} considered the \textit{phaseless super-resolution} aiming at recovering signals with only low-frequency magnitude measurements. In the noiseless setting, the authors in \cite{chen2014algorithm} proposed a combinatorial algorithm for signal recovery using only low-frequency Fourier magnitude measurements, but this algorithm requires additional distinguishing conditions on the signal impulses. In the noisy setting, this combinatorial algorithm suffers from error propagation. Instead of assuming the distinguishing conditions on signals, the authors in \cite{jaganathan2016phaseless} used masks to obtain different types of magnitude measurements. The authors provably showed that under appropriate choice of masks, an SDP formulation can be used to recover time-domain impulse signals on the discretized grid.

In this paper, we consider super-resolving time-domain impulse signals located off the grid from only low-frequency Fourier magnitude measurements. To tackle the continuous parameter domain, we propose a novel SDP formulation, employing ANM to recover signals from Fourier magnitude measurements. For example, our approach applies to the magnitude measurements used in \cite{candes2015phase,jaganathan2016phaseless}. In numerical experiments, we show the successful recovery of signals in the continuous domain from only low-frequency magnitude measurements. Furthermore, we compare our algorithm to a simple combining algorithm performing phase retrieval followed by ANM. Our method shows better recovery performance than the simple combining algorithm. In the future work, we will consider the noisy magnitude measurement case.

\textbf{Notations:} In this paper, we denote the set of complex numbers as $\mathbb{C}$. We reserve calligraphic uppercase letters for index sets, e.g., $\mathcal{N}$. We use $|\mathcal{N}|$ as the cardinality of the index set $\mathcal{N}$. We use the superscripts $*$, $T$, and $H$ to denote conjugate, transpose, and conjugate transpose respectively. We reserve $i$ for the imaginary number, i.e., $i^2 = -1$. We denote a time-domain signal as a lowercase letter, and its frequency-domain signal as its uppercase letter. To denote a ground true signal, we use the superscript $o$, e.g., $x^{o}$. For the index of a vector and a matrix, we start with the index $0$; hence, we denote the first element of the vector $X$ as $X_0$, and the top-left element of a matrix $Q$ as $Q_{0,0}$.

\section{Problem Formulation and Background}
Let $x^{o}(t)$ be a sum of Dirac functions expressed as follows:
\par\noindent
\small
\begin{align}
\label{eq:sigmodelstd}
    x^{o}(t) = \sum_{j=1}^{k} c^{o}_j \delta(t-t^{o}_j),
\end{align}
\normalsize
where $\delta(t)$ is the Dirac delta function, $c^{o}_j \neq 0 \in \mathbb{C}$, and $t^{o}_j \in [0,1)$. Its Fourier transform is given by:
\par\noindent
\small
\begin{align}
\label{eq:FourierSignalModel}
    X^{o}_f = \sum_{j=1}^{k} c^{o}_j e^{-i 2 \pi f t^{o}_j} = \sum_{j=1}^{k} |c^{o}_j| a(t^{o}_j,\phi^{o}_j)_f,\; f\in\mathcal{N},
\end{align}
\normalsize
where $f \in \mathcal{N} = \{0,1,...,n-1\}$, $a(t^{o}_j,\phi^{o}_j) \in \mathbb{C}^{|\mathcal{N}|}$ is an \textit{atom} vector, with the $f$-th element given by $a(t^{o}_j,\phi^{o}_j)_f = e^{-i (2\pi f t^{o}_j - \phi^{o}_j)}$. Simply, $X^{o} = V^{o}c^{o}$, where $X^{o} \in \mathbb{C}^{n}$, $V^{o} = [a(t^{o}_1,0),...,a(t^{o}_k,0)]$, and $c^{o} = [|c^{o}_1|e^{i\phi^{o}_1}, ..., |c^{o}_k|e^{i\phi^{o}_k}]^T$. We also define the minimum separation of $x^{o}(t)$, denoted by $\bigtriangleup_t$, as the closest distance between any two different time value $t^{o}_j$'s in cyclic manner \cite{candes2014towards,tang2012csotg}, i.e.,
\par\noindent
\small
\begin{align}
    \Delta_t = \underset{t^{o}_i,t^{o}_j \in [0,1),\;i\neq j}{\min}\;\;|t^{o}_i - t^{o}_j|.
\end{align}
\normalsize

The goal here is to find $x^{o}(t)$ from the low-frequency Fourier magnitude measurements. We state the phaseless super-resolution problem with masks as follows:
\par\noindent
\small
\begin{align}
\label{prob:orig}
    & \text{Find}\;\; x(t)  \nonumber  \\
    & \text{subject to} \;\; Z[r,l]
     = \bigg| \int_{0}^{1} D_r(t) \sum_{j=1}^{k} c^{o}_j \delta(t-t^{o}_j) e^{-i 2\pi l t} dt \bigg|,\\
    & \quad\quad\quad \text{for}\; {-R \leq r \leq R}\;\text{and}\;l \in \mathcal{N}, \nonumber
\end{align}
\normalsize
where $Z[r,l]$ is the $l$-th frequency magnitude obtained by using the $r$-th mask function $D_r(t)$. Depending on the mask function $D_r(t)$, one can have different types of magnitude information. For example, if we choose $1 + e^{-i 2\pi t}$ for $D_r(t)$, we have $|X^o_l + X^o_{l+1}|$, $l \in \mathcal{N}$.

In \cite{jaganathan2016phaseless}, Jaganathan \textit{et al.} consider the case when the signal $x^{o}(t)$ is located on the grid, i.e., $t^{o}_j \in \{0,1,2,....n-1\}$. By $n$-point DFT, (\ref{prob:orig}) is equivalent to
\par\noindent
\small
\begin{align}
\label{prob:PLS_discrete}
    & \text{Find}\;\; x  \nonumber  \\
    & \text{subject to} \;\; Z[r,l]
     = | \langle f_l, D_r x \rangle |,\\
    & \quad\quad\quad \text{for}\; {-R \leq r \leq R}\;\text{and}\;l \in \mathcal{N}, \nonumber
\end{align}
\normalsize
where $x \in \mathbb{C}^{n}$ is a complex valued $k$-sparse vector, $D_r \in \mathbb{C}^{n \times n}$ is a diagonal matrix, and $f_l$ is the conjugate of the $l$-th column of the $n$ point DFT matrix. The authors in \cite{jaganathan2016phaseless} proposed the following semidefinite relaxation-based program for the phaseless super-resolution in the discrete domain by denoting $Y=xx^H$ and relaxing the rank-1 constraint on $Y$:
\par\noindent
\small
\begin{align}
\label{prob:PLS_discrete2}
    & \underset{Y}{\text{minimize}}\;\; ||Y||_1 + \lambda \text{Tr}(Y)  \nonumber  \\
    & \text{subject to} \;\; Z[r,l]^2
     = \text{Tr}( D_r^H f_l f_l^H D_r Y ),\\
    & \quad\quad\quad \text{for}\; {-R \leq r \leq R},\;l \in \mathcal{N},\;\text{and}\;Y \succeq 0,\nonumber
\end{align}
\normalsize
for some $\lambda > 0$.

This paper makes no assumption of $t^{o}_j$ being on the grid. In the next section, we propose an ANM based semidefinite relaxation of (\ref{prob:orig}) to deal with impulse functions off the grid.

\section{Phaseless Super-Resolution in the Continuous Domain}
We define the atomic norm of a vector $X \in \mathbb{C}^{|\mathcal{N}|}$ as follows:
\par\noindent
\small
\begin{align}
\label{eq:atomicNorm}
    & ||X||_{\mathcal{A}} = \inf \{ \sum_j |c_j|: X_l = \sum_j |c_j| a(t_j, \phi_j)_l,\; \substack{{t_j \in [0,1),}\\{ \phi_j \in [0,2\pi)}}\}.
\end{align}
\normalsize
We have the following new proposition for the atomic norm:
\begin{proposition}
\label{pro:sqr_atomic}
For any $X \in \mathbb{C}^{|\mathcal{N}|}$, $\mathcal{N} =\{0,1,...,n-1\}$,
\par\noindent
\small
\begin{align}
\label{eq:sqr_atomic_norm}
    ||X||_{\mathcal{A}}^2 = \inf_{u,s} \bigg\{\frac{1}{|\mathcal{N}|}\text{$s$Tr(Toep($u$))}\;:\;\begin{bmatrix} \text{Toep($u$)} & X \\ X^H & s\end{bmatrix} \succeq 0 \bigg\},
\end{align}
\normalsize
where $\text{Tr}(\cdot)$ is the trace operator, and $\text{Toep($u$)}$ is the Toeplitz matrix whose first column is $u=[u_0, u_1,...,u_{n-1}]^T$. Moreover, suppose after the Vandermonde decomposition \cite{cara1911uber,cara1911uber2,toeplitz1911zur}, $\text{Toep($u$)} = VDV^H$, where $ V = [a(t_1,0),...,a(t_r,0)]$ and $D$ is a positive diagonal matrix. Then, there exists a vector $c$ such that $X=Vc$ and $\sum_j|c_j| = ||X||_{\mathcal{A}}$.
\normalsize
\end{proposition}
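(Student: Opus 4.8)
The plan is to prove the identity by a two-sided argument, exploiting the fact that the phase $\phi_j$ can be absorbed into the coefficient: since $a(t_j,\phi_j)=e^{i\phi_j}a(t_j,0)$, every admissible expansion $X=\sum_j|c_j|a(t_j,\phi_j)$ is the same as a complex-coefficient Vandermonde expansion $X=\sum_j c_j a(t_j,0)$ with $c_j=|c_j|e^{i\phi_j}$, and $\|X\|_{\mathcal{A}}=\inf\{\sum_j|c_j|:X=\sum_j c_j a(t_j,0)\}$. Thus \eqref{eq:sqr_atomic_norm} is the squared/product analogue of the standard semidefinite characterization of the atomic norm, and I would establish it directly rather than by quoting it. Throughout I would use that each rank-one Toeplitz block $a(t,0)a(t,0)^H$ has unit diagonal, so its trace equals $|\mathcal{N}|=n$.

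For the upper bound (SDP value $\le\|X\|_{\mathcal{A}}^2$), I would take any expansion $X=\sum_j c_j a(t_j,0)$ and exhibit a feasible point. Setting $\text{Toep}(u):=\sum_j|c_j|\,a(t_j,0)a(t_j,0)^H$ (which is Toeplitz because each summand depends only on the index difference) and $s:=\sum_j|c_j|$, the block matrix equals $\sum_j|c_j|\,w_jw_j^H$ with $w_j=[a(t_j,0)^T,e^{-i\phi_j}]^T$, hence is PSD, so $(u,s)$ is feasible; moreover $\text{Tr}(\text{Toep}(u))=n\sum_j|c_j|$, giving objective $\tfrac1n s\,\text{Tr}(\text{Toep}(u))=(\sum_j|c_j|)^2$. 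Taking the infimum over expansions yields the bound.

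For the lower bound I would start from an arbitrary feasible $(u,s)$ and invoke the Vandermonde decomposition $\text{Toep}(u)=VDV^H=\sum_j d_j a(t_j,0)a(t_j,0)^H$ with $d_j>0$. The PSD block constraint forces $X\in\text{range}(\text{Toep}(u))=\text{range}(V)$, so $X=Vc$ for some $c$, and the Schur complement gives $s\ge X^H\text{Toep}(u)^{\dagger}X$. The key step, which I expect to be the main obstacle precisely because $\text{Toep}(u)$ may be singular and the naive inverse is unavailable, is to show via a thin SVD of $VD^{1/2}$ that $X^H\text{Toep}(u)^{\dagger}X=\sum_j|c_j|^2/d_j$. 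Granting this, $\text{Tr}(\text{Toep}(u))=n\sum_j d_j$, and Cauchy--Schwarz yields $(\sum_j|c_j|)^2\le(\sum_j|c_j|^2/d_j)(\sum_j d_j)\le s\cdot\tfrac1n\text{Tr}(\text{Toep}(u))$. Since $\|X\|_{\mathcal{A}}\le\sum_j|c_j|$, we obtain $\|X\|_{\mathcal{A}}^2\le\tfrac1n s\,\text{Tr}(\text{Toep}(u))$, and taking the infimum over feasible $(u,s)$ completes the equality.

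Finally, for the ``moreover'' claim I would read off the equality conditions at an optimal $(u,s)$, whose existence I would justify by a standard compactness/closedness argument on the feasible set. At such an optimum the displayed chain of inequalities collapses to equalities; in particular Cauchy--Schwarz is tight (forcing $|c_j|\propto d_j$) and the atomic-norm bound is met, so the coefficient vector $c$ obtained from the Vandermonde decomposition satisfies $\sum_j|c_j|=\sqrt{\tfrac1n s\,\text{Tr}(\text{Toep}(u))}=\|X\|_{\mathcal{A}}$, exhibiting the desired norm-achieving decomposition $X=Vc$.
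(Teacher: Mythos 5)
Your proposal is correct; the upper bound is identical to the paper's (the same rank-one feasible point, with the phase absorbed either into the atom or into the last entry of the augmented vector, which yields the same matrix), but your lower bound takes a genuinely different route. The paper, following Proposition II.1 of \cite{tang2012csotg}, also starts from the Vandermonde decomposition $\text{Toep}(u)=VDV^H$ and the range condition $X=Vw$, but its key step is a sign-vector certificate: pick $q$ with $V^Hq=\text{sign}(w)$ (possible since $V^H$ has full row rank), so that $\text{Tr}(D)=q^HVDV^Hq\geq \frac{1}{s}\,|q^HVw|^2=\frac{1}{s}\big(\sum_j|w_j|\big)^2$, with the degenerate case $s=0$ dispatched separately via a principal-minor argument forcing $X=0$. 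You instead invoke the generalized Schur complement $s\geq X^H\text{Toep}(u)^{\dagger}X$ and the identity $X^H\text{Toep}(u)^{\dagger}X=\sum_j|c_j|^2/d_j$ (which is correct: with $W=VD^{1/2}$ of full column rank, $\text{Toep}(u)^{\dagger}=W(W^HW)^{-2}W^H$ and the expression collapses to $c^HD^{-1}c$), and finish with Cauchy--Schwarz. The trade-off: the paper's certificate is shorter and needs no pseudoinverse facts, while your argument treats $s\geq 0$ uniformly and its Cauchy--Schwarz equality conditions ($|c_j|\propto d_j$) give a more transparent derivation of the ``moreover'' clause; indeed you are more careful there than the paper, which never discusses attainment of the infimum at all. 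One small caveat: your ``standard compactness'' claim needs the scale invariance $(u,s)\mapsto(tu,s/t)$ to normalize, say to $s=\frac{1}{n}\text{Tr}(\text{Toep}(u))$, before sublevel sets of the objective become bounded, since the raw feasible set is unbounded.
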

Proposition \ref{pro:sqr_atomic} is similar to Proposition II.1 in \cite{tang2012csotg}; however, Proposition \ref{pro:sqr_atomic} considers the trace of $s\text{Toep($u$)}$ instead of the sum of trace of $\text{Toep($u$)}$ and $s$. Proposition \ref{pro:sqr_atomic} is essential to derive our new SDP formulation handling phaseless measurements. For readability, we place the proof of Proposition \ref{pro:sqr_atomic} in Appendix.

Motivated by Proposition \ref{pro:sqr_atomic}, we propose the following squared atomic norm minimization for the phaseless super-resolution in the continuous domain, simply \textit{phaseless ANM}:
\par\noindent
\small
\begin{align}
\label{eq:ANMPhaseless}
	& \underset{X}{\text{minimize}}\;\;  ||X||_{\mathcal{A}}^2 \nonumber\\
	& \text{subject to}\;\; a_r(X)=b_r, \; r=1,2,...,q,
\end{align}
\normalsize
where $q$ is the total number of magnitude measurements, $a_r(X)$ is the magnitude mapping function, $|\langle a_r, X \rangle |$, $a_r \in \mathbb{C}^{|\mathcal{N}|}$, and $b_r$'s are magnitude measurement results.

From Proposition \ref{pro:sqr_atomic}, we can change (\ref{eq:ANMPhaseless}) to
\par\noindent
\small
\begin{align}
\label{eq:ANM_SDP_PL1}
	& \underset{u,X,s}{\text{minimize}}\;\;  \frac{1}{|\mathcal{N}|} \text{$s$Tr(Toep($u$))} \nonumber\\
	& \text{subject to}\;\; U \triangleq \begin{bmatrix} \text{Toep($u$)} & X \\ X^{H} & $s$ \end{bmatrix} \succeq 0, \nonumber \\
    & \quad\quad\quad\quad \;\;a_r(X)=b_r, \; r=1,2,...,q,
\end{align}
\normalsize
where $u,X \in \mathbb{C}^{|\mathcal{N}|}$ and $s \in \mathbb{C}$. From the positive semidefiniteness of $U$, $s \geq 0$, and $\text{Toep($u$)} \succeq 0$. Besides, if $X_j \neq 0$, $j \in \mathcal{N}$, then $s \neq 0$ from the non-negativeness of all principal minors of $U$ \cite{meyer2000matrix}. However, because of the magnitude constraints, (\ref{eq:ANM_SDP_PL1}) is a non-convex program.

By the Schur complement lemma \cite{boyd2004convex}, $U \succeq 0$ implies $\text{$s$Toep($u$)} - XX^{H} \succeq 0$. Since $s\text{Toep($u$)} = \text{Toep($su$)}$, by defining $Q=XX^{H}$ and $u'=su$, and getting rid of the rank constraint on $Q$, we have the following SDP relaxation for the phaseless ANM:
\par\noindent
\small
\begin{align}
\label{eq:ANM_SDP_sim}
	& \underset{Q\succeq 0, u'}{\text{minimize}}\;\; \frac{1}{|\mathcal{N}|}\text{Tr(Toep($u'$))}\nonumber\\
	& \text{subject to}\;\; \text{Toep($u'$)} - Q \succeq 0, \nonumber\\
    & \quad\quad\quad\quad\;\; A_r(Q) = b^2_r, \; r =1,2,...,q,
\end{align}
\normalsize
where $A_r(Q)$ is a mapping function, $\text{Tr}(A_r Q)$. Here, $A_r = a_r a_r^H$. 

After solving (\ref{eq:ANM_SDP_sim}), we can find the optimal $\hat{Q}$ and optimal $\text{Toep($\hat{u}$)}$. Our analysis of (\ref{eq:ANM_SDP_sim}) in the following section shows that under certain conditions, $\hat{Q} = X^{o}{X^{o}}^H$. We can recover $X^{o}$ up to global phase by the eigenvalue decomposition of $\hat{Q}$. More importantly, because of the structure of $\text{Toep($\hat{u}$)} = V^{o} D {V^{o}}^H$ for some diagonal matrix $D$, we can apply any parameter estimation method such as Prony's method \cite{kay1988spectral,stoica2005spectral,blu2008sparse} or a matrix pencil method \cite{hua1990matrix,sarkar1995using} to find the time location $t^{o}_j$'s.

\section{Performance Analysis}
\label{sec:performance}
We first consider the analysis of (\ref{eq:ANM_SDP_sim}) given a rank-1 matrix $Q$. And then, we provide the analysis of (\ref{eq:ANM_SDP_sim}). Finally, we look at one scenario having magnitude measurements from a set of masks, in which (\ref{eq:ANM_SDP_sim}) provides the desired signal recovery.
\begin{theorem}
\label{thm:atomic_norm}
For a given rank-1 positive semidefinite matrix $Q=XX^H$, $X \in \mathbb{C}^{|\mathcal{N}|}$, the following optimization problem provides the squared atomic norm of $X$, i.e., $||X||_{\mathcal{A}}^2$:
\par\noindent
\small
\begin{align}
\label{eq:main}
	& \underset{u}{\text{minimize}}\;\; \frac{1}{|\mathcal{N}|}\text{Tr(Toep($u$))}\nonumber\\
	& \text{subject to}\;\; \text{Toep($u$)} - Q \succeq 0.
\end{align}
\normalsize
\end{theorem}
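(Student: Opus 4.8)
The plan is to derive (\ref{eq:main}) directly from the variational characterization of $\|X\|_{\mathcal{A}}^2$ already established in Proposition \ref{pro:sqr_atomic}, by eliminating the scalar variable $s$ through a scaling change of variables. The central observation is that the rank-one structure $Q = XX^H$ is precisely what makes the Schur complement of the block matrix in (\ref{eq:sqr_atomic_norm}) coincide, after rescaling, with the single linear matrix inequality appearing in (\ref{eq:main}). Thus the rank-one hypothesis is not incidental: it is used exactly to identify the Schur complement term $X\,X^H$ with $Q$.

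First I would apply the Schur complement lemma to the feasibility constraint of Proposition \ref{pro:sqr_atomic}, namely
\[
\begin{bmatrix} \text{Toep}(u) & X \\ X^H & s \end{bmatrix} \succeq 0 .
\]
For $s>0$ this is equivalent to $\text{Toep}(u) - \tfrac{1}{s}XX^H \succeq 0$, and multiplying through by $s$, together with $s\,\text{Toep}(u)=\text{Toep}(su)$ and $Q=XX^H$, yields $\text{Toep}(su) - Q \succeq 0$. This is the key identity linking the two formulations.

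Next I would introduce the substitution $u' = s u$ with $s>0$. Under it the objective of Proposition \ref{pro:sqr_atomic} transforms as $\tfrac{s}{|\mathcal{N}|}\text{Tr}(\text{Toep}(u)) = \tfrac{1}{|\mathcal{N}|}\text{Tr}(\text{Toep}(u'))$, while the constraint becomes exactly $\text{Toep}(u') - Q \succeq 0$. Since both the transformed objective and the transformed constraint depend on $(u,s)$ only through $u'$, the two-variable infimum over $(u,s)$ collapses to the single-variable problem (\ref{eq:main}). To make this precise I would verify both inclusions: given any feasible $(u,s)$ for (\ref{eq:sqr_atomic_norm}) with $s>0$, the vector $u'=su$ is feasible for (\ref{eq:main}) with identical objective value; conversely, given any feasible $u'$ for (\ref{eq:main}) and any $s>0$, setting $u=u'/s$ produces a feasible block matrix for (\ref{eq:sqr_atomic_norm}) with identical objective value. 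Hence the two optimal values coincide, and by Proposition \ref{pro:sqr_atomic} this common value equals $\|X\|_{\mathcal{A}}^2$. Note also that $\text{Toep}(u')\succeq Q \succeq 0$ is automatic, matching the $\text{Toep}(u)\succeq 0$ that the block constraint enforces.

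The proof is essentially a corollary of Proposition \ref{pro:sqr_atomic}, so the only delicate point I would treat separately is the boundary case $s=0$, where the Schur complement equivalence does not apply. When $s=0$, positive semidefiniteness of the block matrix forces the off-diagonal block to vanish, so $X=0$; then $\|X\|_{\mathcal{A}}=0$ and $Q=0$, and $u'=0$ is optimal for (\ref{eq:main}) with value zero, so the identity persists. Consequently, for $X\neq 0$ every feasible point of (\ref{eq:sqr_atomic_norm}) has $s>0$ and the scaling bijection $u'=su$ applies cleanly. The main (and rather mild) obstacle is therefore the bookkeeping of ensuring this scaling faithfully transports the infimum, including attainment, across the $s>0$ regime; attainment follows from the Vandermonde-decomposition statement of Proposition \ref{pro:sqr_atomic}, which exhibits an optimal $\text{Toep}(u')=V^{o}D{V^{o}}^H$ realizing $\|X\|_{\mathcal{A}}$.
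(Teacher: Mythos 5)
Your proof is correct and takes essentially the same route as the paper's own proof: the scaling substitution $u'=su$ together with the Schur complement lemma, identifying (\ref{eq:main}) with the SDP characterization in Proposition \ref{pro:sqr_atomic}. You are somewhat more careful than the paper (verifying both directions of the change of variables and treating the boundary case $s=0$ explicitly), but the underlying argument is identical.
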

\begin{proof}
We can prove it by using Proposition \ref{pro:sqr_atomic}. Defining $u=u's$, where $s>0$ is a scalar. Then we can re-state the constraint as $Toep(u') - \frac{1}{s}XX^H \succeq 0$. By the Schur complement lemma, we have the optimization problem in Proposition \ref{pro:sqr_atomic}. Therefore, from Proposition \ref{pro:sqr_atomic}, the optimal value of (\ref{eq:main}) is the same as $||X||^2_{\mathcal{A}}$.
\end{proof}

\begin{corollary}
\label{thm:atomic_norm_subset}
If (\ref{eq:ANM_SDP_sim}) gives a rank-1 solution to $Q$, then (\ref{eq:ANM_SDP_sim}) minimizes the squared atomic norm of $X$ among all vectors $X$ satisfying the given constraints $a_r(X)=b_r$, $r=1,2,...,q$.
\end{corollary}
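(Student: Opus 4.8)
The plan is to establish the corollary through a two-sided inequality between the optimal value of the semidefinite program~(\ref{eq:ANM_SDP_sim}) and the optimal value of the phaseless ANM~(\ref{eq:ANMPhaseless}), using Theorem~\ref{thm:atomic_norm} as the bridge that converts the matrix-level objective into the squared atomic norm whenever the $Q$-variable has rank one. Let $(\hat{Q},\hat{u}')$ be an optimal pair for~(\ref{eq:ANM_SDP_sim}) with $\hat{Q}$ of rank one, and write $\hat{Q}=\hat{X}\hat{X}^H$. Since $A_r(\hat{Q})=\text{Tr}(a_r a_r^H \hat{X}\hat{X}^H)=|\langle a_r,\hat{X}\rangle|^2$, the constraint $A_r(\hat{Q})=b_r^2$ is exactly $a_r(\hat{X})=b_r$, so $\hat{X}$ is feasible for~(\ref{eq:ANMPhaseless}).

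First I would show that the SDP optimal value equals $\|\hat{X}\|_{\mathcal{A}}^2$. Holding $\hat{Q}$ fixed and minimizing only over $u'$, the program~(\ref{eq:ANM_SDP_sim}) reduces precisely to problem~(\ref{eq:main}) of Theorem~\ref{thm:atomic_norm} with $Q=\hat{X}\hat{X}^H$; hence the optimal objective of~(\ref{eq:ANM_SDP_sim}) equals $\|\hat{X}\|_{\mathcal{A}}^2$. Combined with the feasibility of $\hat{X}$ noted above, this already gives $\min\{\|X\|_{\mathcal{A}}^2 : a_r(X)=b_r\} \leq \|\hat{X}\|_{\mathcal{A}}^2$, i.e. the phaseless-ANM optimum is at most the SDP optimum.

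For the reverse inequality I would lift an arbitrary feasible point of~(\ref{eq:ANMPhaseless}) into a feasible point of~(\ref{eq:ANM_SDP_sim}). Given any $X$ with $a_r(X)=b_r$ for all $r$, set $Q=XX^H$; then $A_r(Q)=|\langle a_r,X\rangle|^2=b_r^2$, and by Theorem~\ref{thm:atomic_norm} there is a $u$ with $\text{Toep}(u)-Q\succeq 0$ and $\frac{1}{|\mathcal{N}|}\text{Tr}(\text{Toep}(u))=\|X\|_{\mathcal{A}}^2$. Thus $(Q,u)$ is feasible for~(\ref{eq:ANM_SDP_sim}) with objective $\|X\|_{\mathcal{A}}^2$, and by optimality of $(\hat{Q},\hat{u}')$ we obtain $\|\hat{X}\|_{\mathcal{A}}^2 \leq \|X\|_{\mathcal{A}}^2$. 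Minimizing over all feasible $X$ yields the matching bound, so $\hat{X}$ attains the minimum of $\|X\|_{\mathcal{A}}^2$ over the feasible set.

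I expect the only delicate step to be the constraint-matching under the rank-one factorization, namely verifying that $A_r(Q)=b_r^2$ is genuinely equivalent to $a_r(X)=b_r$ (using $b_r\geq 0$ together with $A_r=a_r a_r^H$), and confirming that fixing $\hat{Q}$ collapses~(\ref{eq:ANM_SDP_sim}) exactly onto~(\ref{eq:main}). Everything else is the standard relaxation-tightness argument, and no genuine obstacle arises because Theorem~\ref{thm:atomic_norm} already supplies the rank-one equivalence in both directions.
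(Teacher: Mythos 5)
Your proof is correct and takes essentially the same approach as the paper: the paper's own proof is a single-sentence invocation of Theorem~\ref{thm:atomic_norm}, and your two-sided argument (matching $A_r(Q)=b_r^2$ with $a_r(X)=b_r$ under the rank-one factorization, then lifting any feasible $X$ to $Q=XX^H$) is precisely the detail that invocation leaves implicit. No gaps worth flagging, beyond the routine observation that attainment in Theorem~\ref{thm:atomic_norm} can be replaced by an $\epsilon$-argument if one wants to be fully rigorous.
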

\begin{proof}
From Theorem \ref{thm:atomic_norm}, (\ref{eq:ANM_SDP_sim}) provides the minimum squared atomic norm of $X$ among all vectors $X$ satisfying the constraints $a_r(X)=b_r$, $r=1,2,...,q$.
\end{proof}

Let us consider the case when we have low-frequency Fourier magnitude measurements from a set of masks. The main difference between \cite{candes2014towards,tang2012csotg} and our setting is that we have only magnitude measurements, instead of measurements offering both phases and magnitudes.
\begin{theorem}
\label{thm:nonconvx}
Given the magnitude measurements $|X^{o}_j|$, $|X^{o}_j + X^{o}_{j+1}|$, and $|X^{o}_j - iX^{o}_{j+1}|$, $j \in \mathcal{N}=\{0,1,...,n-1\}$, (\ref{eq:ANM_SDP_sim}) provides the unique solution $Q=X^{o}{X^{o}}^H$, and $x^{o}(t)$ is uniquely obtained up to global phase if the following conditions hold: $X_j \neq 0$, $\forall j \in \mathcal{N}$, and $\Delta_t \geq 4/|\mathcal{N}|$.
\end{theorem}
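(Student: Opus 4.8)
The plan is to split the argument into an \emph{identifiability} part, which uses only the hypothesis $X^o_j\neq0$, and a \emph{tightness-of-relaxation} part, which uses the separation $\Delta_t\ge4/|\mathcal N|$. First I would translate the three magnitude families into linear data on $Q^o=X^o{X^o}^H$. Writing $Q^o_{j,l}=X^o_j\overline{X^o_l}$ and expanding the squared magnitudes gives
\begin{align}
Q^o_{j,j}&=|X^o_j|^2,\nonumber\\
\operatorname{Re}Q^o_{j,j+1}&=\tfrac12\big(|X^o_j+X^o_{j+1}|^2-|X^o_j|^2-|X^o_{j+1}|^2\big),\nonumber\\
\operatorname{Im}Q^o_{j,j+1}&=-\tfrac12\big(|X^o_j-iX^o_{j+1}|^2-|X^o_j|^2-|X^o_{j+1}|^2\big),\nonumber
\end{align}
so the diagonal and the first super-diagonal of $Q^o$ are pinned down by the data. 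Since $X^o_j\neq0$, the known quantity $Q^o_{j,j+1}=X^o_j\overline{X^o_{j+1}}$ determines each relative phase $\arg X^o_j-\arg X^o_{j+1}$; chaining these across consecutive indices and reading magnitudes off the diagonal recovers $X^o$ up to a single global phase. Hence $Q^o$ is the \emph{unique} rank-one positive semidefinite matrix consistent with the constraints $A_r(Q)=b_r^2$, and the corresponding $x^o(t)$ is unique up to global phase.

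Next I would exhibit $Q^o$ as a feasible point of (\ref{eq:ANM_SDP_sim}) and identify its value. Taking the optimal $u'$ furnished by Theorem \ref{thm:atomic_norm} applied to the rank-one matrix $Q^o$, the pair $(Q^o,u')$ satisfies $\mathrm{Toep}(u')-Q^o\succeq0$, $Q^o\succeq0$ and $A_r(Q^o)=b_r^2$, and attains objective $\frac1{|\mathcal N|}\mathrm{Tr}(\mathrm{Toep}(u'))=\|X^o\|_{\mathcal A}^2$. Thus the optimal value of the SDP is at most $\|X^o\|_{\mathcal A}^2$, and by Corollary \ref{thm:atomic_norm_subset} it suffices to prove that \emph{every} optimal $Q$ is rank one: combined with the identifiability established above, this forces $Q=Q^o$, after which $x^o$ follows by the Vandermonde decomposition of $\mathrm{Toep}(\hat u)$.

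The heart of the proof is a dual certificate. I would form the Lagrangian dual of (\ref{eq:ANM_SDP_sim}), with a Hermitian $W\succeq0$ attached to $\mathrm{Toep}(u')-Q\succeq0$ and reals $y_r$ to the equality constraints. Stationarity in $u'$ forces $\mathrm{Tr}(W)=1$ together with vanishing of all nonzero-offset diagonal sums of $W$, while stationarity in $Q$ gives $Z:=W-\sum_r y_rA_r\succeq0$. I would build such a $W$ from the classical super-resolution \emph{dual polynomial} $q(t)=\sum_f z_f e^{i2\pi ft}$ of degree $<|\mathcal N|$ that interpolates the phases of $c^o$ recovered in Step~1 at the nodes $t^o_j$ and obeys $|q(t)|\le1$ with equality \emph{only} at the $t^o_j$; such a polynomial exists precisely because $\Delta_t\ge4/|\mathcal N|$, via the constructions of \cite{candes2014towards,tang2012csotg}. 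Complementary slackness then yields $ZX^o=0$ and $W\big(\mathrm{Toep}(u')-Q\big)=0$ at any optimum, and the strict bound $|q(t)|<1$ off the support confines the range of every optimal $Q$ to $\mathrm{span}(X^o)$, i.e.\ collapses it to rank one.

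I expect the certificate construction to be the main obstacle. The difficulty is twofold: one must reconcile the band-limited measurement operators $A_r=a_ra_r^H$, supported only on the diagonal and first super-diagonal, with the trace–Toeplitz dual constraints $\mathrm{Tr}(W)=1$ and vanishing off-diagonal sums; and one must upgrade mere optimality of $Q^o$ to \emph{uniqueness} by verifying that the off-support strict inequality $|q(t)|<1$ genuinely forces the relaxed variable $Q$ to be rank one, rather than only certifying the optimal value. The separation hypothesis $\Delta_t\ge4/|\mathcal N|$ enters solely at this last step, exactly as in the phased super-resolution theory.
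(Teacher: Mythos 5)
Your Step 1 is correct and matches the paper's opening move: the three magnitude families pin down the diagonal and the first super-/sub-diagonal of $Q$ via linear equations. But from there you only conclude that $Q^o$ is the unique \emph{rank-one} PSD matrix consistent with these constraints (your phase-chaining argument presupposes $Q=XX^H$). This weaker statement is what forces all of your Step 3: you must then show that every optimizer of the relaxed SDP is rank one, via a Lagrangian dual certificate for the phaseless program. That step is never actually carried out --- you yourself flag it as ``the main obstacle'' --- and it is exactly where your proof has a genuine gap. It is not clear the construction works as sketched: the stationarity conditions in $u'$ force $W$ to have unit trace and vanishing off-diagonal sums, while the measurement operators $A_r$ are supported only on the diagonal and first off-diagonal, and you give no argument that a dual polynomial built from the support $\{t^o_j\}$ can be reconciled with both requirements, nor that complementary slackness really confines the range of an \emph{arbitrary} optimizer to $\mathrm{span}(X^o)$.

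The paper closes this gap with a stronger and far more elementary identifiability statement, Lemma \ref{lemma:Q}: any PSD matrix $Q$ --- of \emph{any} rank --- whose diagonal and first off-diagonal entries agree with those of $Q^o=X^o{X^o}^H$ must equal $Q^o$, provided $X^o_j\neq 0$ for all $j$. The proof is an induction on $3\times 3$ principal minors: in each $3\times 3$ principal submatrix the known $2\times 2$ corner is singular (rank one), so positive semidefiniteness forces the unknown entry, e.g. $Q_{0,2}=X^o_0\overline{X^o_2}$, and one propagates outward diagonal by diagonal. Consequently the feasible set of (\ref{eq:ANM_SDP_sim}) in the variable $Q$ is the singleton $\{Q^o\}$: no rank-one-ness of optimizers ever needs to be proved, and no dual certificate for the phaseless SDP is needed at all. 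The separation condition $\Delta_t\geq 4/|\mathcal{N}|$ enters only once, afterward: with $Q=Q^o$ fixed, the remaining minimization over $u'$ is exactly the problem of Theorem \ref{thm:atomic_norm}, i.e. standard ANM/TVNM applied to $X^o$, and the cited results of \cite{candes2014towards,tang2012csotg} then guarantee that the atomic decomposition, hence $x^o(t)$, is unique under that separation. If you replace your Step 3 by a proof of the lemma above, your argument becomes complete --- and essentially identical to the paper's.
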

\begin{proof}
Given magnitude data, $|X^{o}_j|^2$, $|X^{o}_{j+1}|^2$, $|X^{o}_j + X^{o}_{j+1}|^2$, and $|X^{o}_j - iX^{o}_{j+1}|^2$, we can find $Q_{j,j}$, $Q_{j+1,j+1}$, $Q_{j,j+1}$ and $Q_{j+1,j}$, which are the elements of the diagonal, sub-diagonal, and super-diagonal of the matrix $Q$, by simply solving linear equations on $Q$ together. From Lemma \ref{lemma:Q} in Appendix \ref{app:lemma:Q}, we can uniquely recover $Q=X^{o}{X^{o}}^H$ and $X^{o}$ up to global phase. According to Proposition \ref{pro:sqr_atomic} and Theorem \ref{thm:atomic_norm}, (\ref{eq:ANM_SDP_sim}) with $X^{o}$ is essentially the same as the optimization problem dealing with the standard ANM \cite{tang2012csotg} or TVNM \cite{candes2014towards}. Therefore,(\ref{eq:ANM_SDP_sim}) provides unique $x^{o}(t)$ up to global phase if the separation condition holds, i.e., $\Delta_t \geq 4/|\mathcal{N}|$.
\end{proof}

\section{Numerical Experiments}
\label{sec:experiment}
We compare our phaseless ANM against the standard ANM \cite{tang2012csotg} using measurements offering both phases and magnitudes, as well as against a simple algorithm which first performs the phase retrieval \cite{candes2015phase} and then applies the standard ANM \cite{tang2012csotg} to recover the impulse functions from the recovered signal using the phase retrieval. We use CVX \cite{cvx} to solve (\ref{eq:ANM_SDP_sim}).

Fig. \ref{fig:anm_sep_combine} (a) and (b) show the probability of successful recovery from the standard ANM and the phaseless ANM respectively. We conducted 50 trials for each parameter setting and measured the success rate. At each trial, we chose one time impulse $t^{o}_1$ uniformly at random in $[0,1)$, and another time impulse $t^{o}_2$ by adding the separation $\Delta_t$ to $t^{o}_1$ in the cyclic manner. We sampled the real part and imaginary part of time coefficients $c^{o}_j$'s uniformly at random in (0,1). We consider low frequencies, i.e., $\mathcal{M}=\{0,1,...,m-1\}$, where $m<n$, $\mathcal{M} \subseteq \mathcal{N}$. For a set of masks in the phaseless ANM, we use the same masks as those of Theorem \ref{thm:nonconvx} over the index set $\mathcal{M}$. The x-axis represents the separation condition $\Delta_t$ varied from $1/n$ to $11/n$, and y-axis is the number of low-frequency Fourier measurements $m$, varied from $2$ to $30$. In fact, for the phaseless ANM, the number of magnitude measurements is $3m-2$. We evaluated the recovery performance for the signal dimension $n = 32$.  We calculated the Euclidean distance between the estimated and true time locations. If the distance is less than $10^{-3}$, then we consider the estimation successful. Numerical experiments show that our phaseless ANM can find the exact time locations in the continuous domain with the same performance as the standard ANM. For large $k$, e.g., $k=10$, our method also provides the same performance as the standard ANM. We omit the simulation results in this paper due to the space limitation.
\begin{figure}[!t]
\centering
  \includegraphics[width=3.3in]{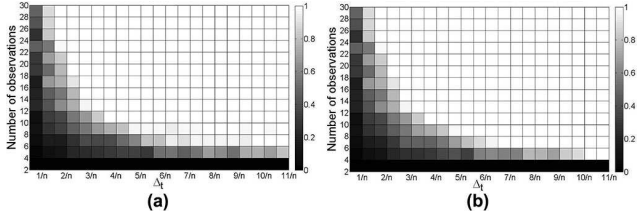}
  \caption{\small The probability $P$ of successful recovery by varying the separation condition $\Delta_t$ and the number of measurements $m$ when $n = 32$. (a) Standard ANM. (b) Phaseless ANM}
\label{fig:anm_sep_combine}
\end{figure}
\begin{figure}[!t]
\centering
  \includegraphics[width=3.3in]{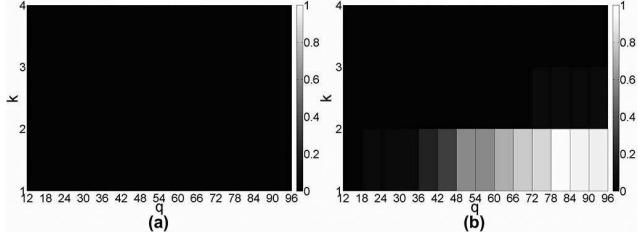}
  \caption{\small The probability $P$ of successful recovery by varying the number of magnitude measurements $q$ and sparsity $k$ when $n = 32$. (a) Phase retrieval and then standard ANM (b) Phaseless ANM}
\label{fig:PLANM_PRANM}
\end{figure}

%

One can think of a simple method conducting the phase retrieval first, and then doing the standard ANM. To compare our algorithm with this simple method, we further carried out numerical experiments by varying the number of magnitude measurements $q$ and the number of sparsity $k$ in (\ref{eq:sigmodelstd}). In this simulation, instead of using a set of masks used in Theorem \ref{thm:nonconvx}, we randomly chose a vector $a_r$ for each magnitude measurement in (\ref{eq:ANMPhaseless}). Fig. \ref{fig:PLANM_PRANM} (a) and (b) show the probability of successful recovery from the simple combining algorithm and the phaseless ANM respectively. The x-axis is the number of magnitude measurements $q$, and y-axis is the number of sparsity $k$. With randomly chosen magnitude measurements, our method outperforms the simple combining algorithm.

\section{Appendix}
\subsection{Proof of Proposition \ref{pro:sqr_atomic}}
We follow the proof of \cite[Proposition II.1]{tang2012csotg}.
\begin{proof}
\label{prf:sqr_atomic}
Let us denote the optimal value of the right hand side of (\ref{eq:sqr_atomic_norm}) by SDP($X$). In order to show $||X||_{\mathcal{A}}^2 = \text{SDP($X$)}$, we will show that (1) $||X||_{\mathcal{A}}^2 \geq \text{SDP($X$)}$ and (2) $||X||_{\mathcal{A}}^2 \leq \text{SDP($X$)}$.

The proof of (1) is easily shown by considering a feasible solution of SDP($X$). For $X=\sum_j |c_j| a(t_j, \phi_j)$, by choosing a feasible solution, $\text{Toep($u$)} = \sum_j |c_j| a(t_j,\phi_j) a(t_j,\phi_j)^H$, and $s=\sum_j |c_j|$, we have
\par\noindent
\small
\begin{align*}
    \begin{bmatrix} \text{Toep($u$)} & X \\ X^H & s \end{bmatrix} = \sum_j |c_j| \begin{bmatrix} a(t_j,\phi_j) \\ 1 \end{bmatrix}\begin{bmatrix} a(t_j,\phi_j) \\ 1 \end{bmatrix}^H  \succeq 0.
\end{align*}
\normalsize
For this feasible solution, $\frac{1}{|\mathcal{N}|} s \text{Tr(Toep($u$))} = (\sum_j |c_j|)^2$, which is $||X||_{\mathcal{A}}^2$. Thus, $\text{SDP($X$)} \leq ||X||_{\mathcal{A}}^2$.

For the proof of (2), we will show that for any $u$, $s$, and $X$, $\frac{1}{|\mathcal{N}|}s \text{Tr(Toep($u$))} \geq ||X||_{\mathcal{A}}^2$. Suppose for some $u$, $s \neq 0$, and $X$, the matrix $U$ in (\ref{eq:ANM_SDP_PL1}) is positive semidefinite. From the positive semidefinite condition, we have $\text{Toep($u$)} \succeq 0$ and $s > 0$. From the Vandermonde decomposition \cite{cara1911uber,cara1911uber2,toeplitz1911zur}, for any positive semidefinite $\text{Toep($u$)}$, we have $\text{Toep($u$)} = VDV^H$, where $V = [a(t_1,0)\;a(t_2,0),...a(t_r,0)]$, and $D$ is a diagonal matrix having $d_j$ as its $j$-th diagonal element. Since $VDV^H = \sum_{j=1}^r d_j a(t_j,0) a(t_j,0)^H$ and $||a(t_j,0)||^2_2=|\mathcal{N}|$, we have $\frac{1}{|\mathcal{N}|} \text{Tr(Toep($u$))} = \text{Tr($D$)}$. Also, from the Vandermonde decomposition and $U \succeq 0$, $X$ is in the range space of $V$; in fact, if $X$ is not in the range of $V$, we can always find a vector $z$ such that $z^H Uz < 0$. Therefore, $X=Vw =\sum_{j=1}^{r} w_j a(t_j,0) $, where $w \in \mathbb{C}^r$. By the Schur complement lemma \cite{boyd2004convex},  $U$ in (\ref{eq:ANM_SDP_PL1}) is expressed as follows:
\par\noindent
\small
\begin{align}
\label{eq:schur}
    VDV^H -\frac{1}{s}Vww^HV^H \succeq 0.
\end{align}
\normalsize
It is noteworthy that we can always find a vector $q$ such that $V^Hq = sign(w)$, where $sign(w)^H w = \sum_{j=1}^{r} |w_j|$, by choosing $q = V(V^H V)^{-1} sign(w)$. This is because $V^H$ has full row rank.
By choosing $q$ such that $V^H q = sign(w)$,  we have
\par\noindent
\small
\begin{align*}
    \text{Tr($D$)} = q^H VDV^H q \geq \frac{1}{s} q^H Vww^H V^H q = \frac{1}{s} (\sum_j |w_j|)^2,
\end{align*}
\normalsize
where the inequality is from (\ref{eq:schur}). Therefore, we have
\par\noindent
\small
\begin{align*}
    \frac{1}{|\mathcal{N}|}s\text{Tr(Toep($u$))} = s\text{Tr($D$)} \geq (\sum_j |w_j|)^2 = ||X||_{\mathcal{A}}^2.
\end{align*}
\normalsize

If $s=0$, from the sufficient and necessary condition for the positive semidefiniteness of a Hermitian matrix, all of $U$'s principal minors need to be non-negative \cite{meyer2000matrix}. Thus, $X_j = 0$, $\forall j \in \mathcal{N}$. In this case, Proposition \ref{pro:sqr_atomic} still holds.
\end{proof}


\subsection{Lemma for the positive semidefinite matrix $Q$}
\label{app:lemma:Q}
\begin{lemma}
\label{lemma:Q}
Let $Q \in \mathbb{C}^{|\mathcal{N}| \times |\mathcal{N}|}$, and $X^{o} \in \mathbb{C}^{|\mathcal{N}|}$. Suppose (1) $Q \succeq 0$, (2) $Q_{j,j} = Q^{o}_{j,j}$, $Q_{j,j+1} = Q^{o}_{j,j+1}$, and $Q_{j+1,j}=Q^{o}_{j+1,j}$, $j \in \mathcal{N}=\{0,1,...,n-1\}$, where $Q^{o}=X^{o}{X^{o}}^H$, (3) $X^{o}_j \neq 0$, $\forall j \in \mathcal{N}$. Then, $Q$ is uniquely determined as $Q=X^{o}{X^{o}}^H$.
\end{lemma}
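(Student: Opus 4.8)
The plan is to exploit the \emph{rank-one rigidity} of $Q^{o}=X^{o}{X^{o}}^{H}$ together with the positive semidefiniteness of $Q$. The starting observation is that every $2\times 2$ principal minor of $Q^{o}$ taken on a pair of \emph{adjacent} indices vanishes: since $Q^{o}_{j,k}=X^{o}_j\overline{X^{o}_k}$, one has
\begin{align*}
Q^{o}_{j,j}Q^{o}_{j+1,j+1}-|Q^{o}_{j,j+1}|^{2}=|X^{o}_j|^{2}|X^{o}_{j+1}|^{2}-|X^{o}_j|^{2}|X^{o}_{j+1}|^{2}=0.
\end{align*}
By hypothesis (2), the matrix $Q$ shares with $Q^{o}$ its diagonal, sub-diagonal and super-diagonal entries, so the same adjacent $2\times 2$ principal minors of $Q$ also vanish, while by hypothesis (3) the diagonal entries $Q_{j,j}=|X^{o}_j|^{2}$ are strictly positive.

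Next I would invoke the Gram-matrix representation afforded by $Q\succeq 0$: write $Q=B^{H}B$ with columns $g_0,\dots,g_{n-1}$, so that $Q_{j,k}=g_j^{H}g_k$ and $\|g_j\|^{2}=Q_{j,j}>0$. For each adjacent pair the vanishing minor reads $\|g_j\|^{2}\|g_{j+1}\|^{2}=|g_j^{H}g_{j+1}|^{2}$, which is precisely the equality case of the Cauchy--Schwarz inequality. Hence $g_j$ and $g_{j+1}$ are linearly dependent, and since all the $g_j$ are nonzero, proportionality propagates by transitivity along the chain $g_0\parallel g_1\parallel\cdots\parallel g_{n-1}$. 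Consequently all Gram vectors lie on a common line, $g_j=v_j\,g$ for a fixed unit vector $g$, so that $Q_{j,k}=\overline{v_j}v_k$; that is, $Q=ww^{H}$ is rank one, with $w_j=\overline{v_j}$.

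It then remains to pin $w$ down against $X^{o}$ using the retained entries. The diagonal gives $|w_j|^{2}=Q_{j,j}=|X^{o}_j|^{2}$, hence $|w_j|=|X^{o}_j|$, and the super-diagonal gives $w_j\overline{w_{j+1}}=Q_{j,j+1}=X^{o}_j\overline{X^{o}_{j+1}}$; since the moduli already match and are nonzero, comparing phases yields $\arg w_j-\arg w_{j+1}=\arg X^{o}_j-\arg X^{o}_{j+1}$ for every $j$. Thus $\arg w_j-\arg X^{o}_j$ is independent of $j$, say equal to a constant $\phi$, so $w=e^{i\phi}X^{o}$ and $Q=ww^{H}=X^{o}{X^{o}}^{H}=Q^{o}$, as claimed.

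I expect the decisive step to be the second paragraph: recognizing that the prescribed entries force the adjacent $2\times 2$ minors to be \emph{exactly} zero, and that zero minors of a positive semidefinite matrix are nothing other than the Cauchy--Schwarz equality condition, which collapses the Gram vectors onto a single line. Everything else---the rank-one conclusion and the phase-matching---is routine once this rigidity is in place. The two points needing care are that the strict positivity of the diagonal (hypothesis (3)) guarantees nonzero Gram vectors, so that proportionality is genuinely transitive, and that the \emph{consecutive} (rather than arbitrary) structure of the prescribed off-diagonal entries still suffices, precisely because adjacency links every index to every other through the chain.
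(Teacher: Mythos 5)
Your proof is correct, and it takes a genuinely different route from the paper's. The paper proceeds by induction on the matrix size, using non-negativity of $3\times 3$ principal minors of $Q$: for $n=3$ the determinant works out to $-|X^{o}_1 Q_{0,2} - X^{o}_0 X^{o}_1 {X^{o}_2}^{*}|^2$, which must be $\geq 0$, forcing $Q_{0,2}=X^{o}_0{X^{o}_2}^{*}$ since $X^{o}_1\neq 0$; the unknown entries are then filled in one diagonal at a time, each determined by a $3\times 3$ principal submatrix containing exactly one unknown, and the general step is only sketched (the paper explicitly omits it for space). You instead argue globally: the prescribed tridiagonal data forces every \emph{adjacent} $2\times 2$ principal minor of $Q$ to vanish, which under a Gram factorization $Q=B^{H}B$ is precisely the Cauchy--Schwarz equality case; since hypothesis (3) makes all Gram vectors nonzero, proportionality chains through consecutive indices, collapsing $Q$ to rank one in a single stroke, after which the diagonal and super-diagonal entries fix the rank-one factor up to a global phase that cancels in $ww^{H}$. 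What each approach buys: the paper's induction uses nothing beyond determinants, but its bookkeeping grows with $n$ and the write-up is incomplete; your argument is complete as written, shorter, and isolates the structural reason the lemma holds (zero minors of a PSD matrix are exactly collinearity of Gram vectors), at the modest cost of invoking the Gram factorization. You also correctly flag the two points where care is needed --- transitivity of proportionality requires the nonvanishing diagonal, and adjacency suffices because consecutive pairs connect all indices --- both of which are implicitly the same places where the paper's induction uses $X^{o}_j\neq 0$.
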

\begin{proof}
From the fact that a Hermitian matrix is positive semidefinite if and only if all of its principal minors are non-negative \cite{meyer2000matrix}, all of $Q$'s principal minors are required to be non-negative. Let us prove our lemma by induction. When $|\mathcal{N}|=3$, the determinant of $Q$ is $-|X^{o}_1 Q_{0,2} - X^{o}_0 X^{o}_1 {X_2^{o}}^{*}|^2$, where $Q_{0,2}$ is unknown. To be $-|X^{o}_1 Q_{0,2} - X^{o}_0 X^{o}_1 {X_2^{o}}^{*}|^2 \geq 0$, $X^{o}_1 Q_{0,2} - X^{o}_0 X^{o}_1 {X_2^{o}}^{*} = 0$. Since $X^{o}_1 \neq 0$, $Q_{0,2}$ is determined uniquely as $X^{o}_0 {X_2^{o}}^{*}$. When $|\mathcal{N}|=4$, we can consider the top-left $3 \times 3$ submatrix of $Q$ and the bottom-right $3 \times 3$ submatrix of $Q$ to determine $Q_{0,2}$ and $Q_{1,3}$ respectively. And then, we can deal with $3 \times 3$ principal submatrix of $Q$ having $Q_{0,4}$ to determine $Q_{0,4}$. In the similar way, when $|\mathcal{N}|=n$, we can uniquely determine every unknown variables in $Q$. We omit the detailed explanation due to the space limitation.
\end{proof}

\small
\bibliographystyle{IEEEbib}
\bibliography{refs_PLSuper}

\begin{thebibliography}{10}

\bibitem{jaganathan2016phaseless}
K.~Jaganathan, J.~Saunderson, M.~Fazei, Y.~C. Eldar, and B.~Hassibi,
\newblock ``Phaseless super-resolution using masks,''
\newblock in {\em Proceedings of IEEE International Conference on Acoustics,
  Speech and Signal Processing (ICASSP)}, 2016, pp. 4039--4043.

\bibitem{brenner1959negative}
S.~Brenner and R.~W. Horne,
\newblock ``A negative staining method for high resolution electron microscopy
  of viruses,''
\newblock {\em Biochimica et biophysica acta}, vol. 34, pp. 103--110, 1959.

\bibitem{zhang2006functional}
H.~F. Zhang, K.~Maslov, G.~Stoica, and L.~V. Wang,
\newblock ``Functional photoacoustic microscopy for high-resolution and
  noninvasive in vivo imaging,''
\newblock {\em Nature biotechnology}, vol. 24, no. 7, pp. 848--851, 2006.

\bibitem{benveniste2002mr}
H.~Benveniste and S.~Blackband,
\newblock ``{MR} microscopy and high resolution small animal {MRI}:
  applications in neuroscience research,''
\newblock {\em Progress in neurobiology}, vol. 67, no. 5, pp. 393--420, 2002.

\bibitem{stolyarova2007high}
E.~Stolyarova, K.~T. Rim, S.~Ryu, J.~Maultzsch, P.~Kim, L.~E. Brus, T.~F.
  Heinz, M.~S. Hybertsen, and G.~W. Flynn,
\newblock ``High-resolution scanning tunneling microscopy imaging of mesoscopic
  graphene sheets on an insulating surface,''
\newblock {\em Proceedings of the National Academy of Sciences}, vol. 104, no.
  22, pp. 9209--9212, 2007.

\bibitem{huang2009super}
B.~Huang, M.~Bates, and X.~Zhuang,
\newblock ``Super resolution fluorescence microscopy,''
\newblock {\em Annual review of biochemistry}, vol. 78, pp. 993, 2009.

\bibitem{candes2014towards}
E.~J. Cand{\`e}s and C.~Fernandez-Granda,
\newblock ``Towards a mathematical theory of super-resolution,''
\newblock {\em Communications on Pure and Applied Mathematics}, vol. 67, no. 6,
  pp. 906--956, 2014.

\bibitem{tang2012csotg}
G.~Tang, B.~N. Bhaskar, P.~Shah, and B.~Recht,
\newblock ``Compressed sensing off the grid,''
\newblock {\em IEEE Transactions on Information Theory}, vol. 59, no. 11, pp.
  7465--7490, 2013.

\bibitem{fienup1982phase}
J.~R. Fienup,
\newblock ``Phase retrieval algorithms: a comparison,''
\newblock {\em Applied optics}, vol. 21, no. 15, pp. 2758--2769, 1982.

\bibitem{gerchberg1972practical}
R.~W. Gerchberg and W.~O. Saxton,
\newblock ``A practical algorithm for the determination of phase from image and
  diffraction plane pictures,''
\newblock {\em Optik}, vol. 35, no. 2, pp. 237--246, 1972.

\bibitem{candes2015phase}
E.~J. Cand{\`e}s, Y.~C. Eldar, T.~Strohmer, and V.~Voroninski,
\newblock ``Phase retrieval via matrix completion,''
\newblock {\em SIAM review}, vol. 57, no. 2, pp. 225--251, 2015.

\bibitem{chen2014algorithm}
Y.~Chen, Y.~C. Eldar, and A.~J. Goldsmith,
\newblock ``An algorithm for exact super-resolution and phase retrieval,''
\newblock in {\em Proceedings of IEEE International Conference on Acoustics,
  Speech and Signal Processing (ICASSP)}, 2014, pp. 754--758.

\bibitem{cara1911uber}
C.~C. Carath{\'e}odory,
\newblock ``{\"U}ber ber den variabilit{\"a}tsbereich der fourierschen
  konstanten von positiven harmonischen funktionen,''
\newblock {\em Rendiconti del Circolo Matematico di Palermo (1884-1940)}, vol.
  32, no. 1, pp. 193--217, 1911.

\bibitem{cara1911uber2}
C.~C. Carath{\'e}odory and L.~Fej{\'e}r,
\newblock ``{\"U}ber ber den zusammenhang der extremen von harmonischen
  funktionen mit ihren koeffizienten und {\"u}ber den picard-landauschen
  satz,''
\newblock {\em Rendiconti del Circolo Matematico di Palermo (1884-1940)}, vol.
  32, no. 1, pp. 218--239, 1911.

\bibitem{toeplitz1911zur}
O.~Toeplitz,
\newblock ``Zur theorie der quadratischen und bilinearen formen von
  unendlichvienlen ver{\"a}anderlichen,''
\newblock {\em Mathenatische Annalen}, vol. 70, no. 3, pp. 351--376, 1911.

\bibitem{meyer2000matrix}
C.~D. Meyer,
\newblock {\em Matrix analysis and applied linear algebra}, vol.~2,
\newblock Siam, 2000.

\bibitem{boyd2004convex}
S.~Boyd and L.~Vandenberghe,
\newblock {\em Convex optimization},
\newblock Cambridge University Press, 2004.

\bibitem{kay1988spectral}
S.~M. Kay,
\newblock {\em Modern spectral estimation: theory and application},
\newblock Prentice Hall, 1988.

\bibitem{stoica2005spectral}
P.~Stoica and R.~L. Moses,
\newblock {\em Spectral analysis of signals},
\newblock Prentice Hall, 2005.

\bibitem{blu2008sparse}
T.~Blu, P.-L. Dragotti, M.~Vetterli, P.~Marziliano, and L.~Coulot,
\newblock ``Sparse sampling of signal innovations,''
\newblock {\em IEEE Signal Processing Magazine}, vol. 25, no. 2, pp. 31--40,
  2008.

\bibitem{hua1990matrix}
Y.~Hua and T.~K. Sarkar,
\newblock ``Matrix pencil method for estimating parameters of exponentially
  damped/undamped sinusoids in noise,''
\newblock {\em IEEE Transactions on Acoustics, Speech, and Signal Processing},
  vol. 38, no. 5, pp. 814--824, 1990.

\bibitem{sarkar1995using}
T.~K. Sarkar and O.~Pereira,
\newblock ``Using the matrix pencil method to estimate the parameters of a sum
  of complex exponentials,''
\newblock {\em IEEE Antennas and Propagation Magazine}, vol. 37, no. 1, pp.
  48--55, 1995.

\bibitem{cvx}
M.~Grant and S.~Boyd,
\newblock ``{CVX}: Matlab software for disciplined convex programming, version
  2.0 beta,'' {http://cvxr.com/cvx}, Sept. 2012.

\end{thebibliography}

\end{document}